\newif\ifarxiv
\newif\ifreview
\title{\textbf{Closed-loop Performance Optimization of Model Predictive Control with Robustness Guarantees}}
\author{Riccardo Zuliani, Efe C. Balta, and John Lygeros%
\thanks{Corresponding author: R. Zuliani. This work was supported as a part of NCCR Automation, a National Centre of Competence in Research, funded by the Swiss National Science Foundation (grant number 51NF40\_225155). All authors are with the Automatic Control Laboratory (IfA), ETH Z\"urich, 8092 Z\"urich, Switzerland \texttt{\small$\{$rzuliani,jlygeros$\}$@ethz.ch}. E. C. Balta is also with inspire AG, 8005 Z\"urich, Switzerland. \texttt{\small efe.balta@inspire.ch}.}}
\begin{document}

\maketitle
\begin{abstract}
Model mismatch and process noise are two frequently occurring phenomena that can drastically affect the performance of model predictive control (MPC) in practical applications.
We propose a principled way to tune the cost function and the constraints of linear MPC schemes \rz{to improve the closed-loop performance} and robust constraint satisfaction on uncertain nonlinear dynamics with additive noise.
The tuning is performed using a novel MPC tuning algorithm based on backpropagation developed in our earlier work.
Using the scenario approach, we provide probabilistic bounds on the likelihood of closed-loop constraint violation over a finite horizon.
We showcase the effectiveness of the proposed method on linear and nonlinear simulation examples.
\end{abstract}
\section{Introduction}
Model predictive control (MPC) is a model-based control technique that can efficiently solve challenging control tasks under input and process constraints by formulating, at each time step, a receding horizon optimization problem.
The mismatch between the nominal model used by the MPC and the true dynamics poses an important challenge in maintaining good closed-loop performance and ensuring constraint satisfaction. 
Many robust MPC methods have been developed in the literature, often relying on constraint tightening or probabilistic satisfaction guarantees.
\rz{However, tightenings are generally designed without explicitly accounting for the receding horizon aspect, and this can lead to overly conservative results.}
Here, we study the MPC problem with model uncertainty and provide a structured way to \rz{design constraint tightenings tailored to closed-loop operation, thus reducing conservatism.}

Tube MPC is a principled way to robustify MPC schemes whenever the process dynamics are unknown or subject to disturbances \cite{mayne2011tube}.
This strategy tightens the MPC constraints so that the resulting closed-loop state-input trajectory satisfies the nominal constraints. 
The tightening is generally designed based on the support of the uncertainty/noise set, which is assumed to be bounded \cite{chisci2001systems}.
Tube MPC schemes have been developed for linear systems subject to bounded additive noise \cite{chisci2001systems}, multiplicative uncertainty \cite{fleming2014robust}, and parametric uncertainty \cite{bujarbaruah2019adaptive}.
Moreover, extensions to nonlinear Tube MPC to deal with additive noise \cite{mayne2011tube} and model uncertainty \cite{kohler2020computationally} have been developed.
Despite its popularity, Tube MPC can be conservative, since constraint tightenings are often designed for the worst-case uncertainty realization, which is unlikely to occur in many practical applications, leading to cautious MPC designs.
Moreover, nonlinear tube-based solutions can be cumbersome to implement numerically and may require significant tuning effort \cite{ross2018scaling}.

A way to reduce conservatism is to construct a representation of the uncertain elements (either implicitly or explicitly) using data and derive probabilistic bounds on the likelihood of constraint satisfaction.
A notable example is the \textit{scenario approach} \cite{calafiore2006scenario}, where samples of the uncertain parameters (called scenarios) are used to obtain a control scheme with good out-of-sample performance.
Unlike tube MPC, the scenario approach can be applied without accurate knowledge of the underlying uncertainty distribution \rz{or support}.
However, constraint satisfaction is guaranteed only in probability instead of in the worst case, where a smaller constraint violation probability likely produces a more conservative performance.
For example, \cite{calafiore2012robust} proposes a scenario approach-based MPC design for uncertain linear systems subject to additive disturbances and derives guarantees on the closed-loop probability of constraint violation at each time step. 
The scheme of \cite{schildbach2014scenario}, in a similar setting, is guaranteed to have a small average constraint violation. 
The scenario approach can also be used in settings where the model dynamics are completely unknown \cite{micheli2022scenario}.
Existing methods, however, are almost exclusively limited to linear system dynamics, or have guarantees for single time-steps, providing little insight into the behavior over closed-loop trajectories.

In this paper, we design the cost and the constraints of an MPC scheme to maximize \rz{closed-loop} performance while ensuring robust constraint satisfaction. 
Our contribution is twofold: i) we provide a novel approach for optimal \rz{closed-loop} tuning of robust nonlinear MPC problems, \rzz{extending \cite{zuliani2023bp} to uncertain and noisy dynamics}, and ii) we use the scenario approach to provide sample-efficient guarantees on the closed-loop probability of constraint violation.
The tuned MPC can be formulated as a convex quadratic program \rz{even for nonlinear dynamics}, and can hence be solved efficiently and reliably with specialized software.
The design parameters are the terminal cost and the input cost of the MPC, as well as linear constraint tightenings.
All variables are tuned using the recently proposed \textit{BackPropagation-MPC} (\textit{BP-MPC}) algorithm \cite{zuliani2023bp}, which can achieve optimal closed-loop MPC designs using a sensitivity-based procedure.
Since the sensitivity information involves the closed-loop trajectory, our method greatly reduces the conservatism compared to existing offline-designed tube-based techniques. 

\paragraph*{Related Work}
\rzz{Tuning the cost and the constraints of MPC controllers is a concept that first appeared in \cite{amos2018differentiable}. In a stream of works, Zanon and Gros studied the interaction between nonlinear MPC and Reinforcement Learning (see \cite{gros2019data} and subsequent publications) and proposed a policy gradient method to improve the closed-loop performance of MPC. All these works, however, lack rigorous convergence guarantees and implicitly assume the MPC controller to be a continuously differentiable function of its parameters. Even under appropriate constraint qualifications, continuous differentiability can never hold everywhere (see the discussion in \cref{subsection:cons_jac} or \cite{bolte2024differentiating} for a comprehensive overview on the notion of differentiability for solution maps of optimization problems). In \cite{zanon2020safe}, the authors propose a tuning mechanism involving a tube MPC scheme to ensure robustness against model uncertainty and stochastic noise. However, this approach relies on traditional tube MPC architectures and, as a result, may suffer from conservativeness.}

\paragraph*{Outline} \rzz{The remainder of the paper is structured as follows. In \cref{section:PF}, we introduce the control problem and the MPC policy. \cref{section:NP} describes how the nominal cost of the MPC can be tuned to enhance nominal performance. In \cref{section:RCS}, we optimize the constraint tightenings to ensure robust constraint satisfaction in closed-loop operation and evaluate the scheme’s out-of-sample performance. Finally, in \cref{section:SIM}, we demonstrate the effectiveness of our approach through simulations.}

\paragraph*{Notation}
$\Z_{[a,b]}$ denotes integers between $a$ and $b$.
$x \sim \P$ means that $x$ is drawn from the probability distribution $\P$.
$\mathbb{E}[x]$ and $\mathbb{P}[x]$ denote expectation and probability of the random variable $x$, respectively. \rzz{$A \succ 0$ means that $A$ is symmetric positive definite.}

\section{Problem formulation}
\label{section:PF}
We consider an uncertain nonlinear system subject to additive disturbances
\begin{align}\label{eq:PF_sys}
x_{t+1}=f(x_t,u_t,d)+w_t,~ x_0 \sim \P_{x_0},
\end{align}
where $x_t\in\R^{n_x}$ and $u_t\in\R^{n_u}$ denote the state and input at time $t$, respectively, and $\P_{x_0}$ is an unknown distribution with known mean $\bar{x}_0$.
The parameter $d\in\R^{n_d}$ is a random variable representing model uncertainty with unknown distribution $\P_d$.
The additive noise $w_t\in\R^{n_w}$ is drawn i.i.d. for every $t$ from the unknown distribution $\P_w$.
The system needs to satisfy the following state and input constraints for all $t$
\begin{align}\label{eq:PF_cst}
H_x x_t \leq h_x ,~~~ H_u u_t \leq h_u.
\end{align}
We consider the case where the input $u_t$ is determined online by an MPC policy $u_t=\mpc(x_t,p,\eta)$, where $p$ and $\eta$ are design parameters to be defined shortly.
The closed-loop dynamics are then given by
\begin{align}\label{eq:PF_closed_loop}
x_{t+1} = f(x_t,\mpc(x_t,p,\eta),d)+w_t.
\end{align}
The nominal dynamics can be obtained from \cref{eq:PF_closed_loop} by setting $w_t=0$ and $d$ to some nominal value that (without loss of generality) we assume to be $0$, leading to
\begin{align}\label{eq:PF_sys_nominal}
\bar{x}_{t+1}=f(\bar{x}_t,\mpc(\bar{x}_t,p,\eta))
\end{align}
where, with a slight abuse of notation, we set $f(\bar{x}_t,\bar{u}_t):=f(\bar{x}_t,\bar{u}_t,0)$, and $\bar{x}_t$ and $\bar{u}_t = \mpc(\bar{x}_t,p,\eta)$ denote the nominal state and the nominal input, respectively.

Our goal is to design an MPC policy that steers the system to the origin while satisfying \cref{eq:PF_cst} for all possible $w_t\sim\P_w$, $d\sim \P_d$, and $x_0\sim\P_{x_0}$ within a finite time horizon $T>0$.
Both these objectives are captured by the following optimization problem.
\begin{align}\label{eq:PF_problem}
\begin{split}
\underset{p,\eta,x,u}{\mathrm{minimize}} & \quad \mathbb{E}_{w,d,x_0} \left[ \sum_{t=0}^{T} \|x_t\|_{Q_x}^2 \right]\\
\mathrm{subject~to} & \quad x_{t+1} = f(x_t,u_t,d)+w_t, \\
& \quad u_t = \mpc(x_t,p,\eta),\\
& \quad H_x x_t \leq h_x, ~ H_u u_t \leq h_u,\\
& \quad \forall  w_t,\, d, \, x_0,~ \forall t \in \Z_{[0,T]}.
\end{split}
\end{align}
where $Q_x \succ 0$ and $w:=(w_0,\dots,w_T)$. While our framework readily allows more complex cost functions, we limit the cost to be a quadratic function of $x$ for simplicity, and refer the reader the reader to \cite[Section VI-C]{zuliani2023bp} for the general case.

We focus on MPC policies that can be expressed as strongly convex quadratic programs.
Specifically, given two design parameters $p:=(P,R)$, with $P,R \succ 0$ (terminal and input cost), and $\eta:=(\eta_x,\eta_u)$ (state and input constraint tightenings), we choose $\mpc(x_t,p,\eta)=v_{0|t}$ by solving
\begin{align}\label{eq:PF_mpc}
\begin{split}
\underset{z_t,v_t}{\mathrm{min.}} & \quad \|z_{N|t}\|_P^2 + \sum_{k=0}^{N-1} \|z_{k|t}\|_{Q_x}^2 + \|v_{k|t}\|_R^2\\
\mathrm{s.t.} & \quad z_{k+1|t}=A_{k|t}z_{k|t}+B_{k|t}v_{k|t}+c_{k|t}, ~ z_{0|t} = x_t, \\
& \quad H_x z_{k|t} \leq h_x - \eta_{x,k}^2 ,~ H_u v_{k|t} \leq h_u - \eta_{u,k}^2,\\
& \quad \forall k \in \Z_{[0,N-1]},
\end{split}
\end{align}
where $z_t:=(z_{0|t},\dots,z_{N|t})$, $v_t:=(v_{0|t},\dots,v_{N-1|t})$, $\eta_x:=(\eta_{x,0},\dots,\eta_{x,N})$, $\eta_u:=(\eta_{u,0},\dots,\eta_{u,N-1})$, and the square in the tightenings is applied elementwise.
The prediction horizon $N$ of the MPC is generally much smaller than $T$.
Since \cref{eq:PF_mpc} may become infeasible in practice, we relax the state constraints with the technique of \cite[Section VI-D]{zuliani2023bp}.

The equality constraints in \cref{eq:PF_mpc} should be designed to ensure that $A_{k|t}z_{k|t}+B_{k|t}v_{k|t}+c_{k|t}\approx f(z_{k|t},v_{k|t})$ for all $k\in\Z_{[0,N-1]}$. To this end, denoting with $(z_{t-1},v_{t-1})$ the optimal state-input trajectory obtained by solving \cref{eq:PF_mpc} at time-step $t-1$, we set
\begin{align*}
&A_{k|t} = \frac{\partial f}{\partial x}(z_{k+1|t-1},v_{k+1|t-1}),\\
&B_{k|t} = \frac{\partial f}{ \partial u}(z_{k+1|t-1},v_{k+1|t-1}),\\
&c_{k|t} = f(z_{k+1|t-1},v_{k+1|t-1}) - A_{k|t}z_{k+1|t-1} - B_{k|t}v_{k+1|t-1}.
\end{align*}
For simplicity, we assume that $A_{k|t}\equiv A$, $B_{k|t}\equiv B$ and $c_{k|t}\equiv 0$ and refer the reader to \cite[Section VI-A]{zuliani2023bp}. \rz{Observe that linear dynamics are only used within the MPC problem \cref{eq:PF_mpc}, whereas the true nonlinear dynamics \cref{eq:PF_sys} are used in \cref{eq:PF_problem}}.

\section{Improving nominal performance}
\label{section:NP}
To solve \cref{eq:PF_problem}, we first design $\theta:=(p,\eta)$ to minimize the cost in \cref{eq:PF_problem} for the nominal dynamics \cref{eq:PF_sys_nominal} by solving
\begin{align}\label{eq:NP_nom_perf}
\begin{split}
\underset{\theta,\bar{x}}{\mathrm{minimize}} & \quad \sum_{t=0}^{T} \|\bar{x}_t\|_{Q_x}^2 \\
\mathrm{subject~to} & \quad \bar{x}_{t+1} = f(\bar{x},\mpc(\bar{x},\theta)), ~ \bar{x}_0 \text{ given},\\
& \quad H_x \bar{x}_t \leq h_x, ~ \forall t \in \Z_{[0,T]}.
\end{split}
\end{align}
We omit the input constraints since the MPC policy \cref{eq:PF_mpc} satisfies them by design.
For simplicity, let $x(\theta):=(\bar{x}_0(\theta),\dots,\bar{x}_T(\theta))$ be the function mapping $\theta$ to the nominal closed-loop trajectory $\bar{x}(\theta)$ obtained by setting $\bar{x}_0(\theta)=\bar{x}_0$ and by iterating the nominal dynamics $\bar{x}_{t+1}=f(\bar{x}_t,\mpc(\bar{x}_t,\theta))$ until time-step $T$.
\rz{Using an exact penalty function, we can reformulate \cref{eq:NP_nom_perf} as the unconstrained minimization problem
\begin{align}\label{eq:NP_nom_perf_unc}
\underset{\theta}{\mathrm{minimize}} ~~ \ell(\bar{x}(\theta)) := \sum_{t=0}^{T} \|x_t(\theta)\|_{Q_x}^2 + c_1 \gamma(x_t(\theta)) ,
\end{align}
where $\gamma(x) := \|\max\{ H_x x - h_x, 0 \} \|_1$, $c_1>0$, and $\max$ is applied elementwise.}
If \cref{eq:NP_nom_perf} is sufficiently well-behaved, and $c_1$ is large enough, \cref{eq:NP_nom_perf} and \cref{eq:NP_nom_perf_unc} are equivalent.
\begin{definition}
\rzz{Problem \cref{eq:NP_nom_perf} is calm at $\theta^*$ if $H_x x_t(\theta^*) \leq h_x$ for all $t$, and there exists $\bar{\alpha}\geq 0$ such that for all $\theta$ sufficiently close to $\theta^*$
\begin{align}
\sum_{t=0}^T \|x_t(\theta)\|_{Q_x}^2 + \bar{\alpha} \gamma(x_t(\theta)) \geq \sum_{t=0}^T \|x_t(\theta^*)\|_{Q_x}^2. \label{eq:calmness_condition}
\end{align}
The constant $\bar{\alpha}$ is called \emph{calmness module}.}
\end{definition}

\smallskip

Calmness is a weak constraint qualification that is verified in many situations. For details, we refer the reader to \cite{burke1991exact}.
\begin{lemma}[{\cite[Theorem 2.1]{burke1991exact}}]\label{lemma:NP_exact_penalty}
The set of calm local minima of \cref{eq:NP_nom_perf} coincide with the set of local minima of \cref{eq:NP_nom_perf_unc} if $c_1$ is chosen at least as large as the calmness modulus.
\end{lemma}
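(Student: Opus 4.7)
I would split the statement into two inclusions and exploit the definition of calmness in each direction. Let me write $J(\theta) := \sum_{t=0}^T \|x_t(\theta)\|_{Q_x}^2$ for the smooth part of the cost, so that $\ell(x(\theta)) = J(\theta) + c_1 \sum_{t=0}^T \gamma(x_t(\theta))$.

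\emph{Calm local minimum of \cref{eq:NP_nom_perf} is a local minimum of \cref{eq:NP_nom_perf_unc}.} Let $\theta^*$ be feasible for \cref{eq:NP_nom_perf} and satisfy the calmness inequality with modulus $\bar\alpha \le c_1$ on an $\epsilon$-ball $B_\epsilon(\theta^*)$. Feasibility gives $\gamma(x_t(\theta^*)) = 0$ for every $t$, hence $\ell(x(\theta^*)) = J(\theta^*)$. Since $\gamma \ge 0$ and $c_1 \ge \bar\alpha$, the calmness inequality yields, for every $\theta \in B_\epsilon(\theta^*)$,
\begin{align*}
\ell(x(\theta)) = J(\theta) + c_1 \sum_t \gamma(x_t(\theta)) \ge J(\theta) + \bar\alpha \sum_t \gamma(x_t(\theta)) \ge J(\theta^*) = \ell(x(\theta^*)),
\end{align*}
which is the claimed local optimality.

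\emph{Local minimum of \cref{eq:NP_nom_perf_unc} is a calm local minimum of \cref{eq:NP_nom_perf}.} The key step is to establish that any local minimum $\theta^*$ of $\ell$ must be feasible. I would argue by contradiction: suppose $\gamma(x_t(\theta^*)) > 0$ for some $t$. Because $\gamma$ is a positively homogeneous, piecewise-linear function with slope $1$ along the active outward direction, and $\theta \mapsto x_t(\theta)$ is smooth, one can construct a perturbation direction $\delta$ along which $\gamma(x_t(\theta^* + s \delta))$ decreases linearly in $s>0$ at a rate $\kappa > 0$, while the smooth part $J$ varies at a rate bounded by its local Lipschitz constant $L$. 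Choosing $c_1$ at least as large as the calmness modulus—which dominates $L/\kappa$—makes the directional decrease of $c_1 \sum_t \gamma(x_t(\cdot))$ strictly larger than the worst-case increase of $J$, contradicting local minimality of $\ell$ at $\theta^*$. Once feasibility of $\theta^*$ is established, the inequality $\ell(x(\theta)) \ge \ell(x(\theta^*)) = J(\theta^*)$ for all $\theta$ near $\theta^*$ is exactly the calmness condition with modulus $c_1$, and restricting it to feasible $\theta$ gives $J(\theta) \ge J(\theta^*)$, i.e. local optimality for \cref{eq:NP_nom_perf}.

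\textbf{Main obstacle.} The delicate point is the feasibility argument in the reverse direction: one must quantify, uniformly on a neighborhood of $\theta^*$, how much the penalty $c_1 \gamma$ drops along the right descent direction versus how much the smooth cost $J$ can grow. This is precisely the calibration encoded by the calmness modulus in \cite{burke1991exact}, and choosing $c_1$ at least as large as that modulus is exactly what makes the construction work. The forward direction, by contrast, is an essentially immediate bookkeeping exercise once the definition of calmness has been unpacked.
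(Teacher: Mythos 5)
The paper does not prove this lemma at all; it is quoted directly from \cite[Theorem 2.1]{burke1991exact}, so the comparison is with that cited result rather than with an in-paper argument. Your first direction (a calm local minimizer of \cref{eq:NP_nom_perf} is a local minimizer of \cref{eq:NP_nom_perf_unc}) is correct and is the standard bookkeeping: with the paper's definition of calmness the inequality $J(\theta)+c_1\sum_t\gamma(x_t(\theta))\geq J(\theta)+\bar{\alpha}\sum_t\gamma(x_t(\theta))\geq J(\theta^*)$ on the $\epsilon$-ball is exactly what is needed.

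The reverse direction is where there is a genuine gap. Your argument hinges on the claim that every local minimizer $\theta^*$ of $\ell$ must be feasible, established by constructing a direction along which $\sum_t\gamma(x_t(\cdot))$ decreases linearly at some rate $\kappa>0$ while $J$ grows at most at its Lipschitz rate $L$, and then asserting that the calmness modulus dominates $L/\kappa$. Neither step survives scrutiny. First, no such descent direction need exist: the composite penalty $\theta\mapsto\sum_t\gamma(x_t(\theta))$ can have critical points, and indeed local minima, at infeasible $\theta$ (for instance when the relevant Jacobian of the closed-loop map is rank-deficient or the violated rows of $H_x$ cannot be decreased through the MPC feedback), and this pathology persists for every finite $c_1$; also, $\theta\mapsto x_t(\theta)$ is only locally Lipschitz and path-differentiable here, not smooth, so even the linear-rate bookkeeping is not available. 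Second, the inequality ``calmness modulus $\geq L/\kappa$'' is unsupported and essentially circular: calmness, as defined in the paper and in \cite{burke1991exact}, is a condition at a \emph{feasible} candidate minimizer quantifying the local behavior of the value under constraint perturbations; it provides no uniform descent information at infeasible points. Burke's Theorem 2.1 sidesteps exactly this issue by stating the equivalence for feasible points (equivalently, by pairing local optimality of the penalized problem with feasibility), which is also the only reading consistent with the paper's definition of calmness, stated only at $\theta^*$ satisfying $H_x x_t(\theta^*)\leq h_x$. So your second half of the reverse direction (feasibility plus local minimality of $\ell$ implies calmness with modulus $c_1$ and local optimality of \cref{eq:NP_nom_perf}) is fine, but the feasibility claim itself cannot be proved by the proposed perturbation argument and is false in the stated generality.
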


\rzz{Computing the calmness modulus can be challenging for nonconvex problems.} \rz{In practice, we can expect $\theta^*$ to be a local minimizer of \cref{eq:NP_nom_perf} if $c_1$ is chosen large enough.} \rzz{We further investigate the impact of $c_1$ in \cref{subsection:penalty_example}.}

\subsection{Conservative Jacobians} \label{subsection:cons_jac}

Problem \cref{eq:NP_nom_perf_unc} can be solved using a simple gradient-based scheme. However, since the cost function in \cref{eq:NP_nom_perf_unc} is typically nondifferentiable \rz{due to the nonsmoothness introduced by the MPC}, a more general notion of gradient is needed. To this end, we use the concept of conservative Jacobians \cite{bolte2021nonsmooth}.
\begin{definition}[{\cite[Section 2]{bolte2021nonsmooth}}]

Let $\varphi:\R^n\to\R^m$ be a locally Lipschitz function. We say that the set-valued function $\J_\varphi:\R^n\tto\R^m$ is a \textit{conservative Jacobian} for $\varphi$, if $\J_\varphi$ is nonempty-valued, outer semicontinuous, locally bounded, and for all paths\footnote{A \textit{path} is an absolutely continuous function $\rho:[0,1]\to\R^n$ admitting a derivative $\dot{\rho}$ for almost every $t\in[0,1]$ and for which the Lebesgue integral of $\dot{\rho}$ between $0$ and any $t\in[0,1]$ equals $\rho(t)-\rho(0)$.} $\rho:[0,1]\to\R^n$ and almost all $t\in[0,1]$ it holds that $\frac{\mathrm{d} \varphi}{\mathrm{d} t}(\rho(t)) = \langle v, \dot{\rho}(t) \rangle ,~~~ \forall v \in \J_\varphi(\rho(t))$. A function is \textit{path-differentiable} if it admits a conservative Jacobian.
\end{definition}
\begin{remark}
\rzz{Conservative Jacobians are particularly important in the context of MPC, where traditional Jacobians, as well as more general notions such as Clarke Jacobians, do not suffice. Typically, the "derivative" of the solution map of an optimization problem is computed by applying the implicit function theorem (IFT) to the optimality conditions of the problem \cite{amos2018differentiable,gros2019data,zanon2020safe}. However, the classical (i.e., smooth) IFT assumes that the underlying function is continuously differentiable, an assumption that does not hold in the MPC setting, where the solution map can fail to be differentiable due to changes in the set of active inequality constraints. As a result, traditional Jacobians cannot be used to analyze the differentiability properties of MPC. Likewise, Clarke Jacobians are inadequate in this context, as they do not satisfy an implicit function theorem (see \cite[Example 1]{bolte2021nonsmooth}).}
\end{remark}

Given two path-differentiable functions $\varphi:\R^n\to\R^m$ and $\chi:\R^m\to\R^p$, the function $\psi:=\chi\circ\varphi$ is path-differentiable with $\J_\psi(z)=\J_\varphi(\chi(z))\J_\chi(z)$.
Importantly, not all locally Lipschitz functions are path-differentiable.
In this paper, we focus on the class of definable functions.

\begin{definition}[{\cite[Definitions 1.4 and 1.5]{coste1999introduction}}]
\rz{A collection $\mathcal{O}=(\mathcal{O}_n)_{n\in\N}$, where each $\mathcal{O}_n$ contains subsets of $\R^n$, is an \textit{o-minimal structure} on $(\R,+,\cdot)$ if
\begin{enumerate}
    \item all semialgebraic subsets of $\R^n$ belong to $\mathcal{O}_n$;
    \item the elements of $\mathcal{O}_1$ are precisely the finite unions of points and intervals;
    \item $\mathcal{O}_n$ is a boolean subalgebra of the powerset of $\R^n$;
    \item if $A\in \mathcal{O}_n$ and $B\in\mathcal{O}_m$, then $A\times B \in \mathcal{O}_{n+m}$;
    \item if $A\in\mathcal{O}_{n+1}$, then the set containing the elements of $A$ projected onto their first $n$ coordinates belongs to $\mathcal{O}_n$.
\end{enumerate}
A subset of $\R^n$ which belongs to $\mathcal{O}$ is said to be \textit{definable} (in the o-minimal structure). A function $\varphi:\R^n\to\R^p$ is \textit{definable} if its graph $\{(x,v):v=\varphi(x)\}$ is definable.}
\end{definition}

Locally Lipschitz definable functions are ubiquitous in control and optimization, and admit a conservative Jacobian.
Moreover, they can be minimized (locally) with \cref{alg:NP_min_of_path_diff}, which is guaranteed to converge to a critical point for a suitable choice of step sizes.

\begin{algorithm}
\caption{Minimization of path-differentiable function}\label{alg:NP_min_of_path_diff}
\begin{algorithmic}[1]
\Input $x^0$, $\{\alpha_k\}_{k\in\N}$, $\texttt{tol}>0$.
\For{$k=1$ to $\infty$}
\State \rz{Compute any} $p^k\in \mathcal{J}_\varphi(x^k)$
\State $x^{k+1}=x^k-\alpha_kp^k$
\State \textbf{If} $\|x^{k}-x^{k-1}\|_2 < \texttt{tol}$ \Return $x^*=x^{k+1}$
\EndFor
\end{algorithmic}
\end{algorithm}

\begin{lemma}[{\cite[Theorem 6.2]{davis2020stochastic}}]\label{lemma:NP_sgd_converges}
Let $\varphi$ be locally Lipschitz and definable in some o-minimal structure, assume that $\sup_k \|x^k\|_2 < \infty$ and that
\begin{align}\label{eq:NP_stepsizes}
\sum_{k=0}^{\infty} \alpha_k = \infty ,~~ \sum_{k=0}^{\infty} \alpha_k^2 < \infty.
\end{align}
\rz{Then $\{x^k\}_{k\in\N}$ obtained with \cref{alg:NP_min_of_path_diff} converges to some $x^*$ satisfying $ 0\in \J_\varphi(x^*)$.}
\end{lemma}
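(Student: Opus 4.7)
The plan is to deploy the framework of stochastic approximation of differential inclusions, adapted to path-differentiable definable functions. The iteration $x^{k+1} = x^k - \alpha_k p^k$ with $p^k \in \mathcal{J}_\varphi(x^k)$ should be read as a discretization of the differential inclusion $\dot x(t) \in -\mathcal{J}_\varphi(x(t))$. Since $\mathcal{J}_\varphi$ is nonempty-valued, outer semicontinuous, and locally bounded, and since $\{x^k\}$ is assumed bounded, the values $p^k$ remain in a compact set, so standard existence theory gives that solutions of this differential inclusion exist globally. First, I would introduce the piecewise-linear interpolation $X(\tau)$ of $\{x^k\}$ on the time grid $\tau_k := \sum_{j<k}\alpha_j$ and argue, using $\sum_k \alpha_k = \infty$ and $\sum_k \alpha_k^2 < \infty$, that $X(\cdot)$ is an asymptotic pseudo-trajectory of the flow of $-\mathcal{J}_\varphi$ in the sense of Benaïm--Hofbauer--Sorin. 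The square-summability controls the discretization error, while divergence of $\sum_k \alpha_k$ ensures that arbitrarily long segments of flow are tracked.

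Second, I would use path-differentiability to establish that $\varphi$ acts as a Lyapunov function for the flow. Along any absolutely continuous solution $x(\cdot)$ of the differential inclusion $\dot x \in -\mathcal{J}_\varphi(x)$, the chain rule for conservative Jacobians gives, for almost every $t$ and every $v \in \mathcal{J}_\varphi(x(t))$,
\begin{align*}
\tfrac{\mathrm{d}}{\mathrm{d} t}\varphi(x(t)) = \langle v, \dot x(t) \rangle = -\|\dot x(t)\|_2^2 \leq 0,
\end{align*}
with equality only where $0 \in \mathcal{J}_\varphi(x(t))$. Hence $\varphi$ is nonincreasing along trajectories and strictly decreasing away from the critical set $\mathcal{S} := \{x : 0 \in \mathcal{J}_\varphi(x)\}$.

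Third, I would invoke a Sard-type property specific to conservative fields of definable functions: the set of critical values $\varphi(\mathcal{S})$ has empty interior (in fact, is a finite union of points by o-minimality). Combined with the Lyapunov property, the general convergence theorem for stochastic approximation of differential inclusions with a strict Lyapunov function yields that the limit set of $\{x^k\}$ is contained in $\mathcal{S}$ and that $\varphi(x^k)$ converges. Boundedness of $\{x^k\}$ ensures at least one accumulation point $x^*$, which must therefore satisfy $0 \in \mathcal{J}_\varphi(x^*)$.

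The hardest ingredient is the Sard-type theorem for definable conservative fields: it is precisely the step that requires the definability hypothesis, since for merely locally Lipschitz functions the critical value set can fill an interval and the trajectories of the differential inclusion can drift indefinitely through a continuum of critical values, destroying convergence. This step rests on nontrivial results from tame geometry (stratification and the curve selection lemma applied to the graph of $\mathcal{J}_\varphi$, which is itself definable because $\varphi$ is) and is the reason the result is stated in the o-minimal setting rather than for general locally Lipschitz $\varphi$.
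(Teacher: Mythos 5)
The paper itself offers no proof of this lemma: it is imported verbatim from \cite[Theorem 6.2]{davis2020stochastic}, and your proposal reconstructs essentially the argument of that reference (and of the related analysis of conservative fields in \cite{bolte2021nonsmooth}): view the iteration as a noiseless stochastic approximation of the differential inclusion $\dot{x}\in-\J_\varphi(x)$, show the interpolated process is an asymptotic pseudo-trajectory in the Bena\"im--Hofbauer--Sorin sense using the two step-size conditions, use the chain rule furnished by path-differentiability to make $\varphi$ a Lyapunov function strictly decreasing off the set $\{x:0\in\J_\varphi(x)\}$, and invoke the definable Sard-type theorem so the critical values form a finite set, which is indeed the step where o-minimality is indispensable. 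So in substance you are following the same route as the source the paper leans on.

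One caveat: what your argument (and the cited theorem) actually delivers is that $\varphi(x^k)$ converges and that every accumulation point of the bounded sequence is critical, i.e.\ subsequential convergence to the critical set; it does not yield convergence of the whole sequence to a single point $x^*$, which is what the lemma's wording literally asserts. Upgrading to single-point convergence would need additional structure (e.g.\ isolated critical points, or a Kurdyka--\L{}ojasiewicz-type length argument under further step-size assumptions). This looseness originates in the paper's paraphrase of the citation rather than in your proof, but if you want to match the statement as written you should either note that only convergence to the critical set is obtained or supply such an extra argument.
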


One way to guarantee bounded iterates $x^k$ is to introduce a projection to a large enough polytopic set $\mathcal{X}$ in the gradient descent update, i.e., $x^{k+1}=\mathcal{P}_{\mathcal{X}}[x^k-\alpha_kp^k]$ (see discussion in \cite[Section 6.1]{davis2020stochastic}). \rz{To guarantee \cref{eq:NP_stepsizes} one can choose
\begin{align}\label{eq:NP_stepsize_rule}
\alpha_k = \frac{c}{k^\zeta} ~~ c>0,~~ \zeta \in (0.5,1].
\end{align}}
In this paper we consider a fixed o-minimal structure $\mathcal{O}$ and assume that all definable functions are definable in $\mathcal{O}$.

To ensure that \cref{eq:NP_nom_perf_unc} can be solved with \cref{alg:NP_min_of_path_diff}, we need $\ell(\bar{x}(\theta))$ to be locally Lipschitz and definable. This is the case if $\ell$ and $\bar{x}$ are locally Lipschitz definable, as both these properties are preserved by composition.
\begin{assumption}\label{ass:NP_cost}
The cost $\ell$ is locally Lipschitz definable.
\end{assumption}

\subsection{The BP-MPC algorithm}

The BP-MPC algorithm \cite{zuliani2023bp} uses backpropagation to efficiently construct $\J_{x}$ for a given $d$ recursively 
\begin{align}\label{eq:NP_backprop}
\J_{x_{t+1}}(\theta) = & \, \J_{f,u}(x_t,u_t,d)\left[ \J_{\mpc,x_t}(x_t,\theta)\J_{x_t}(p) \right. \notag \\  & \left. + \, \J_{\mpc,\theta}(x_t,\theta) \right] + \J_{f,x}(x_{t},u_{t},d)\J_{x_t}(p),
\end{align}
where $\J_{f,x}$ and $\J_{f,u}$ are the partial conservative Jacobians of $f$ with respect to its arguments (and similarly for $\J_{\mpc,x_t}$ and $\J_{\mpc,\theta}$), and $\J_{x}(\theta)=0$, since $x_0$ is independent of $\theta$.
We provide here a general algorithm that works for any value of $w$, $d$, and $x_0$, and later consider the nominal case.
To apply \cref{eq:NP_backprop} we require the following.

\begin{assumption}\label{ass:NP_f_local_lip_semialg}
The function $f$ is locally Lipschitz and definable in $(x,u)$ for all $d\sim\P_d$.
\end{assumption}

To compute the conservative Jacobian $\J_{\mpc}$ of the MPC map, we rewrite \cref{eq:PF_mpc} as a quadratic program in standard form
\begin{align}\label{eq:NP_primal}
\begin{split}
\underset{y}{\mathrm{minimize}} & \quad \frac{1}{2}y^\top Q(p) y + q(x_t,p)^\top y \\
\mathrm{subject~to} & \quad G(p)y \leq g(x_t,p,\eta),\\
& \quad F(p)y = \phi(x_t,p),
\end{split}
\end{align}
and obtain its Lagrange dual
\begin{align}\label{eq:NP_dual}
\begin{split}
\underset{z}{\mathrm{minimize}} & \quad \frac{1}{2}z ^\top H(p) z + h
(x_t,p,\eta)^\top z\\
\mathrm{subject~to} & \quad z = (\lambda,\mu) \in \R^{n_\text{in}} \times \R^{n_\text{eq}},~ \lambda \geq 0.
\end{split}
\end{align}
Note that both problems \cref{eq:NP_primal} and \cref{eq:NP_dual} do not depend on the choice of $d$ and $w$, since the MPC \cref{eq:PF_mpc} utilizes an approximation of the nominal dynamics \cref{eq:PF_sys_nominal}.
The solution $y(\bar{p})$ of \cref{eq:NP_primal}, where $\bar{p}:=(x_t,p,\eta)$, is obtained from the solution $z(\bar{p})$ of \cref{eq:NP_dual} as
$y(\bar{p})=\mathcal{G}(z(\bar{p}),\bar{p})$, where
\begin{align*} 
\mathcal{G}(z(\bar{p}),\bar{p}) := -Q(p)^{-1} \left( [G(p)~F(p)]^\top z(\bar{p}) + q(\bar{p}) \right).
\end{align*}
The existence of $\J_y$ can be guaranteed under the following assumptions.

\begin{assumption}\label{ass:NP_local_lip_semialg}
The maps $Q(p)$, $q(\bar{p})$, $G(p)$, $g(\bar{p})$, $F(p)$, and $\phi(\bar{p})$ are locally Lipschitz definable. Moreover, $Q^{-1}(p)$ is locally Lipschitz.
\end{assumption}

\begin{assumption}\label{ass:NP_feas_strong_conv_licq}
For all values of $x_t$, $p$, and $\eta$, problem \cref{eq:NP_primal} is feasible, strongly convex, and satisfies the linear independence constraint qualification (LICQ).
\end{assumption}

Assumption \ref{ass:NP_local_lip_semialg} is not restrictive in practice, as the class of locally Lipschitz definable functions comprises most functions commonly used in control and optimization (e.g., semialgebraic, trigonometric restricted to a compact definable domain, exponential function). \rzz{Moreover, any combination of definable functions (such as addition, multiplication, power, differentiation, composition) remains definable. A rich body of literature exists on definable functions; for a comprehensive overview, we refer the reader to \cite{coste1999introduction}.}

\rzz{Within \cref{ass:NP_feas_strong_conv_licq}, the feasibility requirement is not restrictive as state constraints can be relaxed by introducing slack variables. To ensure that the controller favors solutions that fulfill the constraints whenever possible, a penalty on the slack variables must be included in the MPC cost. This procedure is described in detail in \cite[Section VI-D]{zuliani2023bp}.} The LICQ assumption holds e.g. if the constraints in \cref{eq:PF_cst} are box constraints $x_\text{min} \leq x_t \leq x_\text{max}$, $u_\text{min} \leq u_t \leq u_\text{max}$. \rz{The convexity assumption can be ensured by design with a suitable parameterization of $Q$. Note that problem \cref{eq:PF_problem} remains nonconvex despite Assumption \ref{ass:NP_feas_strong_conv_licq}.}
\begin{proposition}[{\cite[Theorem 1]{zuliani2023bp}}]\label{prop:NP_bpmpc}
Under Assumptions \ref{ass:NP_local_lip_semialg} and \ref{ass:NP_feas_strong_conv_licq}, the optimizer $z(\bar{p})$ of \cref{eq:NP_dual} is unique and locally Lipschitz definable. Its conservative Jacobian $\J_z(\bar{p})$ contains elements of the form $-U^{-1}V$, where
\begin{align*}
U &\in T (I - \gamma H(p))-I,\\
V &\in -\gamma T(Az+B),\\
T &= \operatorname{diag}(\operatorname{sign}(\lambda_1),\dots,\operatorname{sign}(\lambda_{n_{\text{in}}}),1,\dots,1),
\end{align*}
where $z=(\lambda,\mu)$, $A\in\J_{H}(p)$, $B\in\J_{h}(\bar{p})$, and $\gamma$ is any positive constant.
Moreover, the optimizer $y(\bar{p})$ of \cref{eq:NP_primal} is unique and locally Lipschitz definable with conservative Jacobian
\begin{align*}
W-Q(p)^\top[G(p)^\top~F(p)^\top]Z \in \J_{y}(\bar{p}),
\end{align*}
where $Z\in\J_z(\bar{p})$ and $W \in \mathcal{J}_{\mathcal{G},\bar{p}}(z(\bar{p}),\bar{p})$.
\end{proposition}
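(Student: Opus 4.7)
The plan is to characterize the dual optimizer $z(\bar p)$ as the unique fixed point of a projected-gradient map on $\mathcal{C} := \R^{n_\text{in}}_{\geq 0} \times \R^{n_\text{eq}}$, and then read off $\J_z$ using the implicit-function calculus for conservative Jacobians from \cite{bolte2021nonsmooth}. For any $\gamma > 0$, the KKT conditions of \cref{eq:NP_dual} are equivalent to
\begin{align*}
F(z,\bar p) := z - \Pi_\mathcal{C}\bigl(z - \gamma (H(p)z + h(\bar p))\bigr) = 0,
\end{align*}
where $\Pi_\mathcal{C}$ clips the $\lambda$ block componentwise to the nonnegative orthant and is the identity on the $\mu$ block. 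Under \cref{ass:NP_feas_strong_conv_licq}, strong convexity of \cref{eq:NP_primal} together with LICQ yields a unique primal-dual solution, hence uniqueness of $z(\bar p)$. Since $\Pi_\mathcal{C}$ is locally Lipschitz semialgebraic and by \cref{ass:NP_local_lip_semialg} so are $H(p)$ and $h(\bar p)$, the residual $F$ is locally Lipschitz definable jointly, and the implicit function $z(\bar p)$ inherits these properties.

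Applying the nonsmooth implicit function theorem to $F(z(\bar p),\bar p) \equiv 0$ gives $\J_z(\bar p) \ni -M^{-1} N$ for every invertible $M \in \J_{F,z}(z(\bar p),\bar p)$ and $N \in \J_{F,\bar p}(z(\bar p),\bar p)$. A conservative Jacobian of $\max\{\lambda_i,0\}$ at $\lambda_i$ is $\operatorname{sign}(\lambda_i)\in\{0,1\}$, so with $T$ as in the statement,
\begin{align*}
\J_{F,z} \ni I - T(I - \gamma H(p)) = -U, \quad \J_{F,\bar p} \ni \gamma T(Az + B) = -V,
\end{align*}
for any $A \in \J_H(p)$ and $B \in \J_h(\bar p)$. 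Substituting reproduces the claimed form $\J_z \ni -U^{-1} V$. The primal claim then follows from the chain and sum rules for conservative Jacobians applied to $y(\bar p) = \mathcal{G}(z(\bar p),\bar p)$: the linear map $\mathcal{J}_{\mathcal{G},z}(z,\bar p) = -Q(p)^{-1}[G(p)^\top~F(p)^\top]$ reads off directly from the definition of $\mathcal{G}$, and with $W \in \mathcal{J}_{\mathcal{G},\bar p}(z(\bar p),\bar p)$ and $Z \in \J_z(\bar p)$ one recovers the stated expression for $\J_y(\bar p)$.

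The main technical obstacle is guaranteeing invertibility of $U$ at every admissible selection of $T$, which is what enables the implicit-function step. Choosing $\gamma$ small enough that $I - \gamma H(p)$ is nonexpansive (for instance $\gamma < 2/\lambda_{\max}(H(p))$) controls the rows of $U$ on coordinates where $T_{ii} = 1$, while on coordinates with $T_{ii} = 0$ the corresponding row reduces to $-e_i^\top$, which contributes no kernel direction. LICQ applied to the active constraints then rules out any remaining null vector, so $U$ is nonsingular at every selection; at degenerate points where strict complementarity fails, either choice $\operatorname{sign}(0)\in\{0,1\}$ leads to an invertible $U$, so the conservative Jacobian is well defined everywhere. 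This is the substantive content carried over from \cite[Theorem 1]{zuliani2023bp}.
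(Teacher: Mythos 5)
The paper itself offers no proof of this proposition: it is imported verbatim from \cite[Theorem 1]{zuliani2023bp}, so there is no in-paper argument to compare against. Your sketch follows what is essentially the route of that cited result (built on the nonsmooth implicit differentiation of \cite{bolte2021nonsmooth}): write $z(\bar{p})$ as the fixed point of the projected-gradient map $z\mapsto \Pi_{\mathcal{C}}\bigl(z-\gamma(H(p)z+h(\bar{p}))\bigr)$, differentiate the residual implicitly, and push the result through $\mathcal{G}$ by the chain rule. Your identifications $\J_{F,z}\ni -U$ and $\J_{F,\bar{p}}\ni -V$ are correct (with the caveat that $T$ is formally the projection's Jacobian at the pre-projection point $z-\gamma(H(p)z+h(\bar{p}))$, which complementary slackness lets you rewrite as $\operatorname{sign}(\lambda_i)$), and your primal step correctly yields $-Q(p)^{-1}[G(p)^\top~F(p)^\top]Z+W$; the $Q(p)^\top$ appearing in the statement is evidently a typo for the inverse, so no issue there.

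The one step that does not hold up as written is the invertibility of $U$. The proposition claims the formula for \emph{any} $\gamma>0$, whereas your argument restricts to $\gamma$ small enough that $I-\gamma H(p)$ is nonexpansive, and the appeal to LICQ to ``rule out any remaining null vector'' is asserted rather than shown. The clean argument needs neither restriction: a row of $U$ with $T_{ii}=0$ equals $-e_i^\top$, which forces the corresponding coordinate of any null vector to vanish, while a row with $T_{ii}=1$ equals $-\gamma e_i^\top H(p)$; eliminating the pinned coordinates, $Uv=0$ reduces to $H_{JJ}(p)v_J=0$, where $J$ collects the coordinates with $T_{ii}=1$ and $H_{JJ}(p)=M_J Q(p)^{-1}M_J^\top$ with $M_J$ the corresponding rows of $[G(p)^\top~F(p)^\top]^\top$. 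By complementary slackness every index in $J$ corresponds to an active inequality or an equality, so LICQ (Assumption \ref{ass:NP_feas_strong_conv_licq}) gives $M_J$ full row rank, hence $H_{JJ}(p)\succ 0$ and $v=0$, for every $\gamma>0$ and for every admissible selection of $T$ at points where strict complementarity fails. With that replacement, and noting that the Bolte et al.\ implicit function theorem requires \emph{all} elements of $\J_{F,z}$ to be invertible (which the above delivers), your sketch is sound.
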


Proposition \ref{prop:NP_bpmpc} provides a way to compute the conservative Jacobian $\J_{\mpc}$ of the MPC map.
Combining with \cref{eq:NP_backprop}, we can iteratively construct the conservative Jacobian $\J_{x}$ of the closed-loop trajectory $x$ for any value of $w$, $d$, $x_0$.
The procedure is summarized in \cref{alg:NP_cons_jac}.

\begin{algorithm}
\caption{Conservative Jacobian computation}\label{alg:NP_cons_jac}
\begin{algorithmic}[1]
\Input $\theta$, $w$, $d$, $x_0$.
\Init $\J_{x_0}(\theta)=0$.
\For{$t=0$ to $T$}
    \State Solve \cref{eq:PF_mpc} and set $u_t = \mpc(x_t,\theta)$.
    \State Compute $\J_{x_{t+1}}(\theta)$ using \cref{eq:NP_backprop} and \cref{prop:NP_bpmpc}.
    \State Compute next state $x_{t+1}=f(x_t,u_t,d) + w_t$.
\EndFor
\State\Return $\J_{x}(\theta)$
\end{algorithmic}
\end{algorithm}

To compute the conservative Jacobian of $\bar{x}$ for a given $\theta$ it suffices to set $w=0$, $d=0$, and $x_0=\bar{x}_0$ in \cref{alg:NP_cons_jac}.

\subsection{A gradient-based solution}

Once the conservative Jacobian of the closed-loop trajectory $\bar{x}$ is available, we can obtain the conservative Jacobian of the objective in \cref{eq:NP_nom_perf_unc} using the chain rule $\J_{\ell}(\theta)=\J_\ell(\bar{x})\J_{\bar{x}}(\theta)$.
Combining this with \cref{alg:NP_min_of_path_diff,alg:NP_cons_jac}, we obtain \cref{alg:NP_bpmpc_nominal}, which converges to a critical point of \cref{eq:NP_nom_perf_unc}.

\begin{algorithm}
\caption{BP-MPC for Nominal Performance}\label{alg:NP_bpmpc_nominal}
\begin{algorithmic}[1]
\Input $\theta^0$, $\{ \alpha_k \}_{k\in\N}$, $\texttt{tol}>0$.
\For{$k=0$ to $\infty$}
    \State Compute $J_1^{k} \in \J_{\bar{x}}(\theta^k)$ with \cref{alg:NP_cons_jac}.
    \State Compute \rz{any} $J^{k}_2\in\J_{\ell}(\bar{x})$.
    \State Compute $J^k = J_2^{k} J_{1}^{k}$.
    \State Update $\theta^{k+1}=\theta^k-\alpha_kJ^k$.
    \State \textbf{If} $\|\theta^{k+1}-\theta^{k}\|_2<\texttt{tol}$ \Return $\theta^*=\theta^{k+1}$
\EndFor
\end{algorithmic}
\end{algorithm}

\begin{theorem}\label{thm:NP_bpmpc_nominal}
Suppose that Assumptions \ref{ass:NP_cost}, \ref{ass:NP_f_local_lip_semialg}, \ref{ass:NP_local_lip_semialg}, and \ref{ass:NP_feas_strong_conv_licq} hold, that $\{ \alpha_k \}_{k\in\N}$ is chosen as in \cref{lemma:NP_sgd_converges}, and that $\sup_k\|p^k\|_2< \infty$. {Then $\{\theta^k\}_{k\in\N}$ obtained with \cref{alg:NP_bpmpc_nominal} converges to a critical point $\theta^*$ of \cref{eq:NP_nom_perf_unc}}. Moreover, if \cref{eq:NP_nom_perf} is calm at $\theta^*$, and $c_1$ in \cref{eq:NP_nom_perf_unc} is chosen at least as large as the calmness modulus, then $\theta^*$ is also a local minimizer of \cref{eq:NP_nom_perf}.
\end{theorem}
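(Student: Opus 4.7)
The plan is to cast \cref{alg:NP_bpmpc_nominal} as an instance of \cref{alg:NP_min_of_path_diff} applied to $\ell\circ\bar{x}$, invoke \cref{lemma:NP_sgd_converges} for convergence to a critical point of \cref{eq:NP_nom_perf_unc}, and then appeal to \cref{lemma:NP_exact_penalty} for the constrained statement. The argument factors into three steps: (i) $\ell\circ\bar{x}$ is locally Lipschitz and definable; (ii) the vector $J^k$ produced at each iteration belongs to the conservative Jacobian $\J_{\ell\circ\bar{x}}(\theta^k)$; (iii) convergence and the penalty correspondence.

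For (i), \cref{prop:NP_bpmpc}, under Assumptions \ref{ass:NP_local_lip_semialg} and \ref{ass:NP_feas_strong_conv_licq}, already gives that the map $(x_t,\theta)\mapsto\mpc(x_t,\theta)$ is locally Lipschitz and definable in the fixed o-minimal structure $\mathcal{O}$. Together with \cref{ass:NP_f_local_lip_semialg} evaluated at the nominal parameter $d=0$, the closed-loop update $(x_t,\theta)\mapsto f(x_t,\mpc(x_t,\theta))$ is locally Lipschitz and definable, as these properties are preserved by composition. Iterating $T$ times starting from the fixed $\bar{x}_0$ yields that $\theta\mapsto\bar{x}(\theta)$ is locally Lipschitz and definable. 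Composing with $\ell$ (\cref{ass:NP_cost}) gives the same properties for $\theta\mapsto\ell(\bar{x}(\theta))$, thus guaranteeing the existence of a conservative Jacobian $\J_{\ell\circ\bar{x}}$ and meeting the hypotheses of \cref{lemma:NP_sgd_converges}.

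For (ii), I would verify by induction on $t$ that the matrix $\J_{x_{t+1}}(\theta)$ returned by the recursion \cref{eq:NP_backprop} in \cref{alg:NP_cons_jac} is a valid element of the conservative Jacobian of the map $\theta\mapsto\bar{x}_{t+1}(\theta)$ evaluated at the nominal trajectory. The base case $\J_{\bar{x}_0}(\theta)=0$ is immediate. The inductive step follows from the chain rule for conservative Jacobians applied to the composition $\bar{x}_{t+1}=f(\bar{x}_t,\mpc(\bar{x}_t,\theta))$, using \cref{prop:NP_bpmpc} to supply $\J_{\mpc,x_t}$ and $\J_{\mpc,\theta}$, and \cref{ass:NP_f_local_lip_semialg} to supply $\J_{f,x}$ and $\J_{f,u}$. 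Therefore $J_1^k\in\J_{\bar{x}}(\theta^k)$; combined with any $J_2^k\in\J_\ell(\bar{x}(\theta^k))$ (\cref{ass:NP_cost}), the chain rule yields $J^k=J_2^kJ_1^k\in\J_{\ell\circ\bar{x}}(\theta^k)$.

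For (iii), with (i) and (ii) established, \cref{alg:NP_bpmpc_nominal} is exactly \cref{alg:NP_min_of_path_diff} applied to $\ell\circ\bar{x}$. Under the stepsize condition \cref{eq:NP_stepsizes} and the assumed boundedness of the iterates, \cref{lemma:NP_sgd_converges} yields $\theta^k\to\theta^*$ with $0\in\J_{\ell\circ\bar{x}}(\theta^*)$, i.e., $\theta^*$ is a critical point of \cref{eq:NP_nom_perf_unc}. The final statement follows by \cref{lemma:NP_exact_penalty}: if \cref{eq:NP_nom_perf} is calm at $\theta^*$ and $c_1$ exceeds the calmness modulus, local minimizers of the penalized problem coincide with calm local minimizers of the constrained problem, so any $\theta^*$ that is additionally a local minimizer of \cref{eq:NP_nom_perf_unc} is a local minimizer of \cref{eq:NP_nom_perf}.

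The main obstacle I anticipate is step (ii): one must be careful that the composition-of-conservative-Jacobians identity $\J_{\chi\circ\varphi}=\J_\chi\,\J_\varphi$ is applied only to path-differentiable functions, which requires propagating definability through the $T$-fold composition as in step (i), and that the particular formula \cref{eq:NP_backprop} returns an element of, rather than equals, $\J_{x_{t+1}}$. A secondary subtlety is the transition from ``critical point of \cref{eq:NP_nom_perf_unc}'' to ``local minimizer of \cref{eq:NP_nom_perf_unc}'' needed before invoking \cref{lemma:NP_exact_penalty}; the conclusion should be read as: whenever the limit point $\theta^*$ happens to be a local minimizer of the unconstrained penalty (which is the generic outcome of the subgradient recursion), calmness upgrades it to a local minimizer of the constrained problem.
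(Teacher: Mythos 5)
Your proposal is correct and follows essentially the same route as the paper's (much terser) proof: recognize \cref{alg:NP_bpmpc_nominal} as an instance of \cref{alg:NP_min_of_path_diff} applied to $\ell\circ\bar{x}$, apply \cref{lemma:NP_sgd_converges} for convergence to a critical point, and apply \cref{lemma:NP_exact_penalty} for the constrained conclusion. Your additional steps (i)--(ii) on definability of the composed closed-loop map and on \cref{eq:NP_backprop} producing elements of $\J_{\bar{x}}$, as well as the critical-point versus local-minimizer caveat, merely make explicit details the paper leaves implicit.
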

\begin{proof}
The first part follows immediately by recognizing that \cref{alg:NP_bpmpc_nominal} is implementing a gradient-descent rule equivalent to that in \cref{alg:NP_min_of_path_diff}, and by applying Lemma \ref{lemma:NP_sgd_converges}. The second follows from Lemma \ref{lemma:NP_exact_penalty}.
\end{proof}

\section{Robust constraint satisfaction}
\label{section:RCS}
We now focus on ensuring robust constraint satisfaction by solving the problem
\begin{align}\label{eq:RCS_problem}
\begin{split}
\underset{\tilde{\theta},x,u}{\mathrm{minimize}} & \quad \|\tilde{\theta}-\theta^*\|_2^2\\
\mathrm{subject~to} & \quad x_{t+1}=f(x_t,u_t,d)+w_t,\\
& \quad u_t = \mpc(x_t,\tilde{\theta}),\\
& \quad H_x x_t \leq h_x,~H_uu_t \leq h_u,\\
& \quad \forall  w_t,\, d, \, x_0, ~ \forall t \in\Z_{[0,T]},
\end{split}
\end{align}
where $\theta^*$ is the solution of \cref{eq:NP_nom_perf} obtained with \cref{alg:NP_bpmpc_nominal}.
By penalizing the difference between $\tilde{\theta}$ and $\theta^*$, we ensure that $\tilde{\theta}^* \approx \theta^*$ while satisfying the constraints.

\subsection{Robust constraint satisfaction using BP-MPC}
We assume that a set of i.i.d. samples is available:
\begin{align*}
\mathcal{S} := \{ (w^j,d^j,x_0^j)_{j=1}^M : w^j \sim \P_w^T,~ d^j \sim \P_d,~ x_0^j \sim \P_{x_0} \}.
\end{align*}
\rz{Denoting $\delta=(w,d,x_0)\in\mathcal{S}$, Problem \cref{eq:RCS_problem} becomes}
\begin{align}\label{eq:RCS_problem_tractable}
\begin{split}
\underset{\tilde{\theta},x^\delta,u^\delta}{\mathrm{minimize}} & \quad \|\tilde{\theta}-\theta^*\|_2^2\\
\mathrm{subject~to}~~\, & \quad x_{t+1}^\delta=f(x_t^\delta,u_t^\delta,d)+w_t,\\
& \quad u_t^\delta = \mpc(x_t^\delta,\tilde{\theta}),\\
& \quad H_x x_t^\delta \leq h_x, ~ \forall \delta = (w,d,x_0) \in \mathcal{S}.
\end{split}
\end{align}
With the same strategy as in \cref{section:NP}, we can remove all constraints from \cref{eq:RCS_problem_tractable} using a penalty function 
\begin{align}\label{eq:RCS_problem_tractable_unc}
\underset{\tilde{\theta}}{\mathrm{min.}} ~ \, \|\tilde{\theta}-\theta^*\|_2^2  +  c_1 \sum_{\delta\in\mathcal{S}} \sum_{t=0}^{T} \gamma(x_t^\delta(\tilde{\theta})),
\end{align}
where $x^\delta(\tilde{\theta}):=(x^\delta_0(\tilde{\theta}),\dots,x^\delta_T(\tilde{\theta}))$ is the function mapping $\tilde{\theta}$ to the closed-loop trajectory $x^\delta(\tilde{\theta})$ obtained by setting $x_0^\delta(\tilde{\theta})=x_0$ and by iterating \cref{eq:PF_sys} until time-step $T$ with parameters $\tilde{\theta}$, $w$, and $d$, with $\delta=(w,d,x_0)$.
%

To facilitate the task of quantifying the robustness of the solution, we solve \cref{eq:RCS_problem_tractable_unc} using the Pick2Learn (P2L) algorithm \cite{paccagnan2024pick}, outlined in \cref{alg:RCS_bpmpc_constraints}. P2L converges to a local solution of \cref{eq:RCS_problem_tractable} under appropriate calmness assumptions.

\begin{algorithm}
\caption{BP-MPC for Robust Constraint Satisfaction}
\label{alg:RCS_bpmpc_constraints}
\begin{algorithmic}[1]
\Input $\theta^*$, $\mathcal{S}$,
\Init $\tilde{\theta}^0=\theta^*$, $\mathcal{T}^0=\emptyset$, $\mathcal{E}^0=\mathcal{S}$, $k=0$, $\texttt{converged}=\text{False}$.
\While{not $\texttt{converged}$}
    \State For all $\delta\in\mathcal{E}^k$, compute $\gamma^\delta := \sum_{t=0}^T\gamma(x_t^\delta(\tilde{\theta}^k))$.
    \If{$\gamma^\delta>0$ for some $\delta\in\mathcal{E}^k$}
        \State Select $\bar{\delta}=\operatorname{argmax}_\delta \gamma^\delta$.
    \ElsIf{$\sum_{t=0}^T\J_\gamma(x_t^\delta(\tilde{\theta}^k)) \neq \{ 0 \}$ for some $\delta\in\mathcal{E}^k$}
        \State Select any $\bar{\delta}$ with $\sum_{t=0}^T\J_\gamma(x_t^{\bar{\delta}}(\tilde{\theta}^k) \neq \{ 0 \}$.
    \Else
        \State $\texttt{converged}\gets \text{True}$
    \EndIf
    \State Update $\mathcal{T}^{k+1} = \mathcal{T}^k \cup \{ \bar{\delta} \}$, $\mathcal{E}^{k+1} = \mathcal{E}^k \setminus \{\bar{\delta}\}$.
    \State Solve \cref{eq:RCS_problem_tractable_unc} with $\mathcal{T}^{k+1}$ instead of $\mathcal{S}$ and obtain $\tilde{\theta}^{k+1}$.
    \State $k \gets k+1$
\EndWhile
\State \Return $\tilde{\theta}^*=\tilde{\theta}^{k}$ and $\mathcal{T}^*=\mathcal{T}^k$.
\end{algorithmic}
\end{algorithm}

P2L requires solving \cref{eq:RCS_problem_tractable_unc} several times for (much) smaller datasets $\mathcal{T}^k$ replacing $\mathcal{S}$. To do so, we use the scheme in \cref{alg:SGD}. Note that generally $|\mathcal{T}^*| \ll |\mathcal{S}|$ \cite{campi2018general}.

\begin{algorithm}
\caption{GD algorithm to solve \cref{eq:RCS_problem_tractable_unc}}
\label{alg:SGD}
\begin{algorithmic}[1]
\Input $\tilde{\theta}^k$, $\{ \alpha_j \}_{j\in\N}$, $\texttt{max\_it}\in\N$, $\mathcal{T}^k$,
\Init $\tilde{\theta}^{k,0}=\tilde{\theta}^k$.
\For{$j=0$ to $\texttt{max\_it}$}
    \For{$\delta\in\mathcal{T}^k$}
        \State Compute $J_1^{\delta}\in\J_{x^\delta}(\tilde{\theta}^{k,j})$ with \cref{alg:NP_cons_jac}.
        \State Compute any $J_2^{\delta} \in \sum_{t=0}^{T} \J_\gamma(x_t^\delta(\tilde{\theta}^{k,j}))$.
    \EndFor
    \State Compute gradient $J^{k,j}=2(\tilde{\theta}^{k,j}-\theta^*)+\sum_{\delta\in\mathcal{T}^k}J_2^\delta J_1^\delta$
    \State Update $\tilde{\theta}^{k,j+1}=\tilde{\theta}^{k,j}-\alpha_j J^{k,j}$.
\EndFor
\State \Return $\tilde{\theta}^{k+1}=\tilde{\theta}^{k,j+1}$.
\end{algorithmic}
\end{algorithm}
%
%

\begin{theorem}\label{thm:RCS_bpmpc_constraints}
Suppose that Assumptions \ref{ass:NP_f_local_lip_semialg}, \ref{ass:NP_local_lip_semialg}, and \ref{ass:NP_feas_strong_conv_licq} hold, that $\{ \alpha_j \}_{j\in\N}$ satisfy the stepsize condition in \cref{lemma:NP_sgd_converges}, and that in \cref{alg:SGD}, for any $\mathcal{T}^k$, $\sup_j\|\tilde{\theta}^{k,j}\|_2< \infty$. Then $\tilde{\theta}^k$ converges to a critical point $\tilde{\theta}^*$ of \cref{eq:RCS_problem_tractable_unc}. Moreover, if \cref{eq:RCS_problem_tractable} is calm at $\tilde{\theta}^*$, and $c_1$ in \cref{eq:RCS_problem_tractable_unc} is chosen at least as large as the calmness modulus, then $\tilde{\theta}^*$ is a local minimizer of \cref{eq:RCS_problem_tractable}.
\end{theorem}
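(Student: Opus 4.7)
The plan is to mirror the two-step strategy used in \cref{thm:NP_bpmpc_nominal}, but with an additional layer to handle the outer Pick2Learn loop of \cref{alg:RCS_bpmpc_constraints}. At every outer iteration $k$, \cref{alg:SGD} is precisely an instance of \cref{alg:NP_min_of_path_diff} applied to the finite-sample cost
\begin{equation*}
\ell^k(\tilde{\theta}) := \|\tilde{\theta}-\theta^*\|_2^2 + c_1 \sum_{\delta\in\mathcal{T}^k} \sum_{t=0}^T \gamma(x_t^\delta(\tilde{\theta})),
\end{equation*}
so I would first establish that each inner call converges to a point $\tilde{\theta}^{k+1}$ satisfying $0\in\J_{\ell^k}(\tilde{\theta}^{k+1})$. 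Local Lipschitz continuity and definability of $\ell^k$ follow from Assumptions \ref{ass:NP_f_local_lip_semialg}, \ref{ass:NP_local_lip_semialg}, and \ref{ass:NP_feas_strong_conv_licq} via \cref{prop:NP_bpmpc}, which delivers these properties for the closed-loop trajectory $x^\delta(\cdot)$ produced by \cref{alg:NP_cons_jac}, together with the standard preservation of these properties under composition with the semialgebraic map $\gamma$ and under finite sums. Convergence then follows from \cref{lemma:NP_sgd_converges} using the boundedness hypothesis on $\{\tilde{\theta}^{k,j}\}$ and the step-size condition \cref{eq:NP_stepsizes}.

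Second, I would argue that the outer loop terminates in at most $|\mathcal{S}|$ iterations, since each non-terminating pass transfers one scenario from the finite set $\mathcal{E}^k$ into $\mathcal{T}^k$. Let $k^\star$ denote the terminating index and $\tilde{\theta}^* := \tilde{\theta}^{k^\star}$. The stopping test in \cref{alg:RCS_bpmpc_constraints} guarantees that, for every $\delta \in \mathcal{S}\setminus \mathcal{T}^*$,
\begin{equation*}
\gamma(x_t^\delta(\tilde{\theta}^*))=0 \quad \text{and} \quad \textstyle\sum_{t=0}^T \J_\gamma(x_t^\delta(\tilde{\theta}^*)) = \{0\}.
\end{equation*}
By the chain rule for conservative Jacobians \cite{bolte2021nonsmooth}, the contribution of each such $\delta$ to the conservative Jacobian of the full penalized cost of \cref{eq:RCS_problem_tractable_unc} is identically zero. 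Combined with $0\in \J_{\ell^{k^\star-1}}(\tilde{\theta}^*)$ delivered by the inner step, this shows that $\tilde{\theta}^*$ is a critical point of \cref{eq:RCS_problem_tractable_unc} with the full scenario sum.

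For the local minimality claim I would simply invoke \cref{lemma:NP_exact_penalty} applied to the pair \cref{eq:RCS_problem_tractable}--\cref{eq:RCS_problem_tractable_unc}: they share the exact-penalty structure used in \cref{section:NP}, and calmness of \cref{eq:RCS_problem_tractable} at $\tilde{\theta}^*$ with $c_1$ chosen at least as large as the calmness modulus promotes the critical point $\tilde{\theta}^*$ to a local minimizer of the constrained scenario problem.

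The main obstacle will be making rigorous the transfer step in the outer loop, namely that vanishing $\gamma$ and vanishing $\J_\gamma$ on $\mathcal{S}\setminus\mathcal{T}^*$ jointly imply that criticality for the restricted cost $\ell^{k^\star-1}$ lifts to criticality for the full cost. Care is needed because conservative Jacobians are set-valued and chain-rule identities are inclusions rather than equalities, so one must verify that every element of $\J_{\ell^{k^\star-1}}(\tilde{\theta}^*)$ extends to an element of the full conservative Jacobian by appending the (necessarily zero) contributions indexed by $\mathcal{S}\setminus\mathcal{T}^*$.
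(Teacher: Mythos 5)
Your proposal is correct and follows essentially the same route as the paper: inner convergence of \cref{alg:SGD} to a critical point of the restricted penalized cost (the paper cites \cite[Theorem 3]{bolte2021nonsmooth} where you invoke \cref{lemma:NP_sgd_converges}, which is the same mechanism), termination of the outer loop in at most $|\mathcal{S}|$ steps, lifting restricted stationarity to the full scenario sum because the stopping test forces $\sum_{t=0}^T\J_\gamma(x_t^\delta(\tilde{\theta}^*))=\{0\}$ for every $\delta\in\mathcal{S}\setminus\mathcal{T}^*$, and finally \cref{lemma:NP_exact_penalty} under calmness. The obstacle you flag about set-valued chain rules is not an issue here, since the omitted samples contribute the singleton $\{0\}$ to the Minkowski sum, exactly as in the paper's argument.
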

\begin{proof}
By \cite[Theorem 3]{bolte2021nonsmooth}, we have that \cref{alg:SGD} converges to a critical point of \cref{eq:RCS_problem_tractable_unc} for any set of samples $\mathcal{S}$. Next, \cref{alg:RCS_bpmpc_constraints} must always converge in at most $|\mathcal{S}|$ iterations (as $\mathcal{E}^{|\mathcal{S}|}=\emptyset$). If \cref{alg:RCS_bpmpc_constraints} terminates after $k=|\mathcal{S}|$ iterations, then $\tilde{\theta}^*$ trivially solves \cref{eq:RCS_problem_tractable_unc}. If the algorithm terminates after $k<|\mathcal{S}|$ iterations, then $\tilde{\theta}^*$ satisfies
\begin{align*}
\tilde{\theta}^*-\theta^* + c_1 \sum_{\delta\in\mathcal{T}^*}\sum_{t=0}^T \J_\gamma(x_t^\delta(\tilde{\theta}^*)) \ni 0.
\end{align*}
Since for all $\delta\in \mathcal{S}\setminus \mathcal{T}^*$ we have $\J_\delta (\tilde{\theta}^*)=\{0\}$ for all $t$, $\tilde{\theta}^*$ also satisfies
\begin{align*}
\tilde{\theta}^*-\theta^* + c_1 \sum_{\delta\in\mathcal{S}}\sum_{t=0}^T \J_\gamma(x_t^\delta(\tilde{\theta}^*)) \ni 0,
\end{align*}
meaning that $\tilde{\theta}^*$ is a critical point of \cref{eq:RCS_problem_tractable_unc}. If the calmness assumption is met, then $\tilde{\theta}^*$ is a local minimizer of \cref{eq:RCS_problem_tractable} by Lemma \ref{lemma:NP_exact_penalty}.
\end{proof}

\subsection{Out-of-sample constraint satisfaction}\label{subsection:RCS_out_of_sample}
In this section, we study how well $\tilde{\theta}^*$ performs on unseen samples obtained from $\P_{w}$, $\P_{d}$, $\P_{x_0}$ (that is, assuming no distribution shift) by adapting the results of \cite{campi2018general}.
We want to ensure that the constraint violation probability
\begin{align*}
V(\tilde{\theta}^*):=\mathbb{P}_{w,d,x_0}\left\{ H_x x_t(\tilde{\theta}^*,w,d,x_0) > h_x ~ \forall t\in\Z_{[0,T]} \right\}.
\end{align*}
is smaller than a certain tolerance $\epsilon\in(0,1)$. Here $x(\tilde{\theta}^*,w,d,x_0)$ denotes the closed-loop trajectory obtained from \cref{eq:PF_sys} starting from $x_0$ with parameters $w$, $d$, and $\theta$, and the probability is with respect to $\P_{w}^{T} \times \P_d \times \P_{x_0}$.
Due to the probabilistic choice of $\mathcal{S}$, this statement is made with confidence $1-\beta$,
\begin{align}\label{eq:RCS_bound_scenario}
\mathbb{P}_\mathcal{S} \left\{ V(\tilde{\theta}^*) > \epsilon \right\} \leq \beta
\end{align}
with $\beta\in(0,1)$.
In \cref{eq:RCS_bound_scenario} the probability is with respect to the multi-sample $\mathcal{S}$ in \cref{eq:RCS_problem_tractable}, which is drawn from $(\P_{w}^{T} \times \P_d \times \P_{x_0})^M$.
If \cref{eq:RCS_bound_scenario} is satisfied for very small values of $\beta$, we can practically guarantee $V(\tilde{\theta}^*)\leq \epsilon$ \cite{campi2018general}.

The guarantees make use of the following notion.
\begin{definition}[Support subsample]\label{def:RCS_support_constraint}
Given a collection of samples $\mathcal{S}=\{ \delta^j, j\in\Z_{[1,M]} \}$, a \textit{support subsample} is a collection of $k$ elements $\mathcal{D}=\{ \delta^{j_i}: i\in \Z_{[1,k]} \}$, with $j_1<\dots<j_k$, such that solving \cref{eq:RCS_problem_tractable} with $\mathcal{S}$ replaced with $\mathcal{D}$ produces the same solution.
\end{definition}
%

\rz{Note that the set $\mathcal{T}^*$ returned by \cref{alg:RCS_bpmpc_constraints} is a support sub-sample of $\mathcal{S}$. The simplicity with which one can identify a support subsample, enabled by the P2L algorithm, is the primary reason why we decided to solve \cref{eq:RCS_problem} using \cref{alg:RCS_bpmpc_constraints} instead of \cref{alg:SGD} with $\mathcal{T}^k=\mathcal{S}$. Using \cref{alg:SGD} directly might require less training time, but identifying a support subsample may be very challenging.}

We further require the following assumption, which is verified if the calmness constraint qualification is satisfied and $c_1$ in \cref{eq:RCS_problem_tractable_unc} is chosen large enough.

\begin{assumption}\label{ass:RCS_feasibility_on_seen_samples}
For all $\delta\in\mathcal{S}$, $H_x x_t^\delta(\tilde{\theta}^*)\leq h_x$, where $\tilde{\theta}^*$ is obtained with \cref{alg:RCS_bpmpc_constraints}.
\end{assumption}

\begin{theorem}[{\cite[Theorem 1]{campi2018general}}]\label{thm:RCS_scenario}
Let Assumption \ref{ass:RCS_feasibility_on_seen_samples} hold, and let $\beta\in(0,1)$. Let $\epsilon:\{ 0,M \}\to[0,1]$ be defined as $\epsilon(M)=1$ and $\epsilon(k)=1-\sqrt[M-k]{\beta/(M\binom{M}{k})}$ for $k < M$. Then $\mathbb{P}_\mathcal{S} \{ V(\tilde{\theta}^*) > \epsilon(k^*) \} \leq \beta$, where \rz{$k^*=|\mathcal{T}^*|$}.
\end{theorem}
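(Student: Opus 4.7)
The plan is to reduce the statement to a direct application of \cite[Theorem 1]{campi2018general}, for which two ingredients must be verified: (i) the candidate $\tilde{\theta}^*$ returned by \cref{alg:RCS_bpmpc_constraints} is feasible on the entire multi-sample $\mathcal{S}$, and (ii) $\mathcal{T}^*$ is a support subsample of $\mathcal{S}$ in the sense of \cref{def:RCS_support_constraint}. Once (i)--(ii) are in place, the combinatorial/probabilistic machinery of the general scenario approach yields the bound automatically, since the function $\epsilon(\cdot)$ defined via the binomial identity is exactly the one for which the Campi--Garatti theorem applies.

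For ingredient (i), I would simply invoke \cref{ass:RCS_feasibility_on_seen_samples}, which states that $H_xx_t^\delta(\tilde{\theta}^*)\leq h_x$ for every $\delta\in\mathcal{S}$ and every $t$. This is the feasibility hypothesis underlying the scenario approach. For ingredient (ii), the key is the termination condition of \cref{alg:RCS_bpmpc_constraints}: upon exit, every $\delta\in\mathcal{E}^*=\mathcal{S}\setminus\mathcal{T}^*$ satisfies $\gamma^\delta=0$ and $\sum_{t=0}^T \J_\gamma(x_t^\delta(\tilde{\theta}^*))=\{0\}$. Combined with the critical-point characterization established in the proof of \cref{thm:RCS_bpmpc_constraints}, namely
\begin{align*}
\tilde{\theta}^*-\theta^*+c_1\sum_{\delta\in\mathcal{T}^*}\sum_{t=0}^T \J_\gamma(x_t^\delta(\tilde{\theta}^*))\ni 0,
\end{align*}
this shows that $\tilde{\theta}^*$ is also a critical point of \cref{eq:RCS_problem_tractable_unc} when the dataset is reduced to $\mathcal{T}^*$, and remains feasible on $\mathcal{T}^*$; so solving the reduced problem reproduces $\tilde{\theta}^*$.

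Having established that $\mathcal{T}^*$ is a support subsample with $k^*=|\mathcal{T}^*|$, the final step is to plug this into \cite[Theorem 1]{campi2018general}. That theorem asserts, for any function $\epsilon:\{0,\dots,M\}\to[0,1]$ with $\epsilon(M)=1$ and $\sum_{k=0}^M\binom{M}{k}(1-\epsilon(k))^{M-k}=\beta$, that $\mathbb{P}_\mathcal{S}\{V(\tilde{\theta}^*)>\epsilon(k^*)\}\leq\beta$, which is precisely the claim.

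The main obstacle I foresee is the support-subsample verification, because in nonconvex problems the critical point associated with a given dataset need not be unique, so ``solving with $\mathcal{T}^*$ yields the same solution'' requires a careful interpretation of \cref{def:RCS_support_constraint}. The argument sketched above sidesteps this by appealing to the first-order optimality conditions rather than to uniqueness of the minimizer, exploiting precisely the fact that the inactive samples in $\mathcal{S}\setminus\mathcal{T}^*$ contribute null elements to the conservative Jacobian at $\tilde{\theta}^*$; this is exactly the role played by the second branch of the \textbf{if} clause in \cref{alg:RCS_bpmpc_constraints} and it is what makes the Pick2Learn algorithm compatible with the general scenario approach framework.
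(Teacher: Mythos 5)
Your proposal follows the same route as the paper: the result is not proved from scratch but obtained by invoking \cite[Theorem 1]{campi2018general}, after checking (i) feasibility of $\tilde{\theta}^*$ on every $\delta\in\mathcal{S}$, which is exactly \cref{ass:RCS_feasibility_on_seen_samples}, and (ii) that $\mathcal{T}^*$ is a support subsample in the sense of \cref{def:RCS_support_constraint}, which the paper states as a remark preceding the theorem and attributes to the P2L construction. One caveat on your justification of (ii): arguing that $\tilde{\theta}^*$ is a critical point of \cref{eq:RCS_problem_tractable_unc} restricted to $\mathcal{T}^*$ does not by itself imply that ``solving the reduced problem reproduces $\tilde{\theta}^*$'', precisely because in this nonconvex setting a solver can terminate at a different critical point; criticality is necessary but not sufficient for reproducibility. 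The mechanism the paper (and \cite{paccagnan2024pick}) actually relies on is structural rather than variational: in \cref{alg:RCS_bpmpc_constraints} the returned $\tilde{\theta}^*$ is, by construction, the output of \cref{alg:SGD} run on the dataset $\mathcal{T}^*$ alone (with the sample-independent warm-start chain initialized at $\theta^*$), so the decision depends on $\mathcal{S}$ only through the selected subset $\mathcal{T}^*$; this compression property is what makes $\mathcal{T}^*$ a valid support subsample for the generalized scenario bound, and your first-order argument should be replaced by (or at least supplemented with) this observation. With that repair, plugging $k^*=|\mathcal{T}^*|$ into \cite[Theorem 1]{campi2018general} gives the claim exactly as you state.
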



\cref{fig:RCS_summary} summarizes the complete algorithmic procedure proposed in this paper. \rzz{Choosing an appropriate size for $\mathcal{S}$ remains an open research question. In general, increasing the number of samples leads to tighter out-of-sample bounds at the expense of greater computational complexity. If $\mathcal{S}$ is too large, the resulting solutions may become overly conservative, depending also on the chosen MPC parameterization used in \cref{eq:PF_mpc}. We further study this tradeoff in the numerical example presented in \cref{subsec:SIM_linear}}
\begin{figure}
\begin{center}
\scalebox{0.95}{
\begin{tikzpicture}[auto, -latex, semithick, font=\scriptsize]

\node [align=center] (in) {Problem\\Specification};
\node [draw,rectangle,minimum height=2em,minimum width=2em,right of = in, node distance = 2cm, align=center] (alg2) {Optimize\\ Nominal\\ Performance};
\node [above of = alg2,node distance = 0.75cm] (text_alg2) {\hyperref[alg:NP_bpmpc_nominal]{(Alg. 3)}};
\node [draw,rectangle,minimum height=2em,minimum width=2em,right of = alg2, node distance = 2.15cm, align=center] (alg3) {Optimize\\ Constraint\\ Satisfaction};
\node [above of = alg3, node distance = 0.75cm] (text_alg3) {\hyperref[alg:RCS_bpmpc_constraints]{(Alg. 4)}};
\node [right of = alg3, node distance = 2.2cm, yshift = 0.5cm, align=center] (out1) {Deploy on\\ Real System};
\node [right of = alg3, node distance = 2.2cm, yshift = -0.5cm, align=center] (out2) {Compute\\ Confidence\\ (\cref{thm:RCS_scenario})};
\node [right of = out2, node distance = 1.4cm] (eps) {$\epsilon$};

\node (s) [below of = alg3, node distance = 1.15cm] {Data set $\mathcal{S}$};

\draw [draw,-latex] (in) -- (alg2);
\draw [draw,-latex] (alg2) -- node [pos=0.5,above] {$\theta^*$} (alg3);
\draw [draw,-latex] (alg2) -- node [pos=0.5,above] {$\theta^*$} (alg3);
\draw [draw,-latex] ([yshift=0.25cm]alg3.east) -- node [pos=0.5,above] {$\tilde{\theta}^*$} (out1);
\draw [draw,-latex] ([yshift=-0.25cm]alg3.east) -- node [pos=0.5,above] {$\mathcal{T}^*$} (out2);
\draw [draw,-latex] (out2.east) -- (eps);

\draw [draw,-latex] (s.north) -- (alg3.south);

\end{tikzpicture}}
\end{center}
\vspace{-0.3cm}
\captionsetup{belowskip=-10pt}
\caption{Summary of the entire algorithm.}\label{fig:RCS_summary}
\end{figure}

\section{Simulation example}
\label{section:SIM}
\subsection{Cart pendulum example}\label{subsec:SIM_nonlinear}

We test our method on the pendulum on a cart of \cite{guemghar2002predictive}, whose state is $(x,\dot{x},\phi,\dot{\phi})$ and whose dynamics are given by
\begin{align}\label{eq:SIM_cartpole}
\begin{split}
\ddot{\phi}(t) &= \frac{m\mu g \sin(\phi) - \mu \cos(\phi)(u + \mu \dot{\phi}^2\sin(\phi))}{mJ-\mu^2\cos(\phi)^2},\\
\ddot{x}(t) &= \frac{J(u+\mu \dot{\phi}^2\sin(\phi))-\mu^2g\sin(\phi)\cos(\phi)}{mJ-\mu^2\cos(\phi)^2},
\end{split}
\end{align}
where $x$ and $\dot{x}$ are the linear position and velocity of the cart, and $\phi$ and $\dot{\phi}$ are the angular position and velocity of the pendulum, respectively.
The input $u$ is the force applied to the cart.
We use Runge-Kutta $4$ with a sample time of $0.05$ seconds to obtain discrete time dynamics, and impose the constraints $|u(t)|\leq 0.75$, $|\phi(t)|\leq 0.2$, $|\dot{x}(t)| \leq 0.8$.
Note that the constraint on the angle $\phi$ requires the pendulum to remain near the upright position: this is challenging to satisfy, as the cart must move quickly to reach the origin, but not too quickly to avoid violating the constraint.
To retain definability, we reduce the domain of the trigonometric functions to a finite interval, and set the functions to zero outside.
The mass $m$, the inertia $J$ and the coefficient $\mu$ of the system are given by $m=\bar{m}(1+d_1)$, $J=\bar{J}(1+d_2)$, $\mu=\bar{\mu}(1+d_3)$, where $\bar{m}$, $\bar{J}$, $\bar{\mu}$ are known nominal values of \cite{guemghar2002predictive}, and $d=(d_1,d_2,d_3)$ is a random variable uniformly distributed in the set $[-0.05,0.05]^3$.
The noise $w_t$ is sampled uniformly from the set $\{ 0 \} \times [-0.01,0.01] \times \{ 0 \} \times [-0.1,0.1]$.
The initial condition is $x_0 = \bar{x}_0 + (0,\omega_1,0,\omega_2)$, with $\bar{x}_0=(-3,0,0,0)$ and $\omega_1,\omega_2$ are sampled independently and uniformly from the interval $[-0.3,0.3]$; note that in all cases only the velocity and angular velocity are affected by the uncertainty.
We use the linearization scheme described in \cite{zuliani2023bp} to obtain linear dynamics for \cref{eq:PF_mpc}.
Moreover, we choose a short horizon $N=5$ (whereas $T=120$) and $Q_x=\operatorname{diag}(1,0.001,1,0.001)$.

After running \cref{alg:NP_bpmpc_nominal} to obtain $\theta^*$, we run \cref{alg:RCS_bpmpc_constraints} with a set $\mathcal{S}$ of $1000$ samples (where each problem is solved with $1000$ GD iterations) choosing $c_1=40$, and $\alpha_k=0.1/k^{0.6}$.
We additionally add a squared $2$-norm penalty on the constraint violation to the cost of \cref{eq:RCS_problem_tractable_unc}, multiplied by the factor $c_2=40$.
This introduces an additional degree of freedom for tuning the algorithm without compromising the results of \cref{thm:RCS_bpmpc_constraints}.
After \cref{alg:RCS_bpmpc_constraints} terminates, we obtain $|\mathcal{T}^*|=3$, which provides a theoretical bound of $V_\mathcal{S}(\tilde{\theta}^*)\leq 0.0407$ with confidence $\beta=10^{-6}$.

For cross-validation, we test the tuned policy $\mpc(\cdot,\tilde{\theta}^*)$ on $1000$ unseen samples of $(w,d,x_0)$.
\cref{table:cost_and_violation} compares the performance of our method against the nominal MPC with $\theta^*$ obtained by \cref{alg:NP_bpmpc_nominal}, and against a nonlinear MPC controller utilizing the nominal nonlinear model for its dynamics and terminal state cost equal to the stage cost. To improve the performance of the nonlinear MPC, we increase its prediction horizon to $N=15$.
Both these alternatives fail to satisfy the constraints on every unseen scenario, whereas our method does not violate any constraints.

\begin{table}[ht]
\centering
\begin{tabular}{c c c c c}
\toprule
& \textbf{Average cost} & \multicolumn{3}{c}{\textbf{Violation}} \\
\cmidrule(lr){3-5}
& & Ratio & Total & Relative \\
\cmidrule(lr){1-5}
$\mpc(\cdot,\tilde{\theta}^*)$ & $330.163$ & $0\%$ & $0$ & $0\%$ \\
$\mpc(\cdot,\theta^*)$ & $293.482$ & $100\%$ & $0.646$ & $15.984\%$ \\
Nonlinear MPC & $299.992$ & $100\%$ & $10.772$ & $111.259\%$\\
\bottomrule
\end{tabular}
\caption{Closed-loop cost and constraint violation, cart pendulum}\label{table:cost_and_violation}
\end{table}

Figure \ref{fig:SIM_trajectories} shows the averaged closed loop trajectories (solid line) and the range spanned by $1000$ trajectories (shaded) of the linear position and velocity over time for the nominal MPC $\theta^*$ and the tuned MPC $\tilde{\theta}^*$. Note how the nominal MPC (in orange) is more aggressive in the earlier time-steps and therefore fails to guarantee constraint satisfaction under disturbances. The tuned MPC, on the other hand, manages to reduce the speed of the cart just enough to ensure safety.

\begin{figure}[ht]
\ifarxiv
    \centering
    \includegraphics{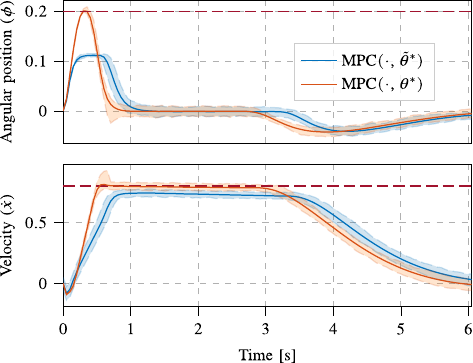}
\else
    \input{Figures/trajectories.tex}
    \captionsetup{belowskip=-8pt}
\fi
\caption{Average (solid line) and range (shaded area) of $1000$ closed-loop trajectories of the nominal and the tuned MPC schemes. The dashed red line represents the state constraints.}
\label{fig:SIM_trajectories}
\end{figure}

\rzz{Figure \ref{fig:cartpend_animation} shows the cart-pendulum system at different time steps under the control of the tuned MPC (top) and the nominal MPC (bottom), for a randomly generated unseen sample. While the nominal MPC brings the system to the origin more quickly (compare the two carts at $t=60$), it is also more aggressive during the initial phase of the motion, resulting in a violation of the angle constraint at $t=6$, as highlighted in the zoomed inset.}

\begin{figure}[ht]
\centering
\ifarxiv
    \includegraphics{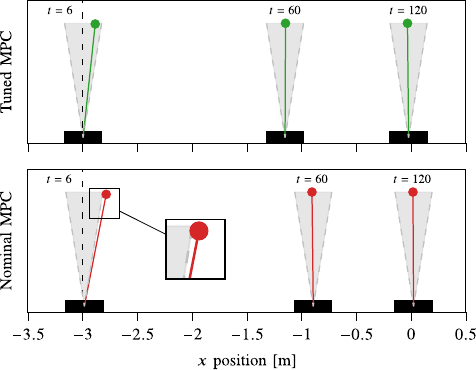}
\else
    \input{Figures/cartpend.tex}
\fi
\caption{Cart-pendulum under the action of the tuned MPC (top) and nominal MPC (bottom). Observe the violation of the angle constraint at time-step $t=6$ (in the inset).} \label{fig:cartpend_animation}
\end{figure}


\subsection{Quadrotor example}

Next, we compare with the nonlinear tube MPC of \cite{kohler2019computationally}, which we were not able to use in \cref{subsec:SIM_nonlinear} as the LMI (44) in \cite{kohler2019computationally}, necessary to obtain the shape of the tubes, was infeasible. When simulating our scheme, we consider the system dynamics and constraints of \cite{kohler2019computationally}, with the difference that the constraints on pitch and roll are tightened to $|\phi_i|\leq \pi/6$. We consider a larger additive noise $\|w\|^2\leq 2$ and uncertainty on $n_0$ and $d_0$ spanning the $\pm 20\%$ range. When simulating the robust MPC of \cite{kohler2019computationally}, however, we reduced the uncertainty to the range $\pm 11.5\%$ and removed the noise, as this was required to ensure the feasibility of the scheme.

The simulation results on $1000$ unseen samples are shown in Table \ref{table:cost_and_violation_quadrotor}. Note that, despite operating with no noise and less uncertainty, the scheme in \cite{kohler2019computationally} performs $6\%$ worse than our scheme, whereas both satisfied the constraints on all samples. Note that the performance of the nominal scheme $\theta^*$ is not much different from that of the robust scheme $\tilde{\theta}^*$, indicating that this problem may not be particularly challenging, even after increasing the uncertainty range, the noise magnitude, and reducing the constraints from \cite{kohler2019computationally}.
In this case, the P2L algorithm here terminated with $|\mathcal{T}^*|=2$, guaranteeing with confidence $1-10^{-6}$ that $V_{\mathcal{S}}(\tilde{\theta}^*)\leq 0.034$.

\begin{table}[ht]
\centering
\begin{tabular}{c c c c c}
\toprule
& \textbf{Average cost} & \multicolumn{3}{c}{\textbf{Violation}} \\
\cmidrule(lr){3-5}
&& Ratio & Total & Relative \\
\cmidrule(lr){1-5}
$\mpc(\cdot,\tilde{\theta}^*)$ & $283.590$ & $0\%$ & $0$ & $0$ \\
$\mpc(\cdot,\theta^*)$ & $282.536$ & $100\%$ & $0.392$ & $2.18\%$ \\
MPC of \cite{kohler2019computationally} & $300.604$ & $0\%$ & $0$ & $0$ \\
\bottomrule
\end{tabular}
\captionsetup{belowskip=-5pt}
\caption{Closed-loop cost and constraint violation, quadrotor}\label{table:cost_and_violation_quadrotor}
\end{table}

\subsection{Linear example} \label{subsec:SIM_linear}

We further compare our method to the Tube MPC of \cite{mayne2005robust} on the linear system of \cite{mayne2005robust}. In this setting, the model is known, but subject to additive noise. Following \cite{mayne2005robust}, we use $N=15$ for the tube MPC, but choose a shorter horizon $N=5$ for our scheme to better highlight its superior performance. By utilizing the same procedure described in Section \ref{subsec:SIM_nonlinear}, we draw $500$ samples to construct $\mathcal{S}$ and obtain $V_{\mathcal{S}}(\tilde{\theta}^*)\leq0.063$ with confidence $1-10^{-6}$ ($|\mathcal{T}^*|=2$). The results (obtained by simulating with $1000$ unseen samples) are reported in \cref{table:cost_and_violation_linear}. Note that we achieve a $50\%$ \eb{closed-loop} performance improvement without ever violating the constraints, \eb{highlighting the utility of the proposed framework in reducing conservatism}.

\begin{table}[ht]
\centering
\begin{tabular}{c c c}
\toprule
& \textbf{Average cost} \!\!\!\!\!\! & \textbf{Total Violation} \\
\cmidrule(lr){1-3}
$\mpc(\cdot,\tilde{\theta}^*)$ \!\!\!\!\!\!\! & $215.645$ & $0$ \\
Tube MPC of \cite{mayne2005robust} \!\!\!\!\!\!\! & $435.124$ & $0$ \\
\bottomrule
\end{tabular}
\captionsetup{belowskip=-5pt}
\caption{Closed-loop cost and constraint violation, linear}\label{table:cost_and_violation_linear}
\end{table}

\rzz{We now investigate how performance and constraint satisfaction vary with the size of the sample set $\mathcal{S}$. To this end, we generate $100$ random samples and order them in the set $\mathcal{S} = \{ \delta_{1}, \dots, \delta_{100} \}$ in increasing order of the constraint violation $\gamma^{\delta_i} = \sum_{t=0}^{T} \gamma(x_t^{\delta_i}(\tilde{\theta}^k))$ obtained with the nominal parameter $\theta^*$, so that $\gamma^{\delta_1} \leq \gamma^{\delta_2} \leq \dots \leq \gamma^{\delta_{100}}$. We then partition $\mathcal{S}$ into $20$ sets $\{\mathcal{S}^i\}_{i=1}^{20}$, each containing the first $5i$ samples of $\mathcal{S}$, and run \cref{alg:RCS_bpmpc_constraints} with $\mathcal{S}$ replaced with $\mathcal{S}^i$ for each $i$. Since the samples are ordered by their constraint violation, we expect the solution obtained with $\mathcal{S}^i$ to yield worse performance, but higher constraint satisfaction, on unseen data compared to the solution obtained with $\mathcal{S}^j$, whenever $i > j$.}
\begin{figure}[ht]
\centering
\ifarxiv
    \includegraphics{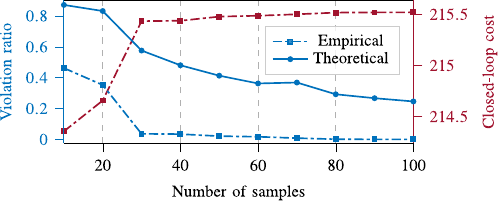}
\else
    \input{Figures/sample_size.tex}
    \captionsetup{belowskip=-5pt,aboveskip=-5pt}
\fi
\caption{Empirical and theoretical constraint violation chance (blue), and closed loop cost (red) as a function of the number of samples in $\mathcal{S}$.} \label{fig:samples}
\end{figure}

\rzz{\cref{fig:samples} confirms our hypothesis: observe how the impact on the performance is minimal, despite exhibiting a clear upward trend, whereas the empirical constraint violation chance, measured on $1000$ unseen samples, drops significantly from nearly $50\%$ to $0\%$. The theoretical chance of constraint violation, computed through \cref{thm:RCS_scenario}, also decreases while upper-bounding the empirical one. However, the theoretical bound need not be stricly decreasing: for example, a small increase occurs with $70$ samples, due to an increased number of support constraints. This phenomen is inevitable due to the a-posteriori nature of the P2L procedure. Future work may focus on how to obtain prior bounds that only depend on fixed problem quantities such as dimensions or number of samples.}

\subsection{On the effect of the penalty parameter} \label{subsection:penalty_example}
\rzz{We further study the effect of the penalty parameter $c_1$ on the closed-loop constraint violation using the same simulation example as in \cref{subsec:SIM_linear} with a single randomly generated sample $\delta$ extracted from $\P_w^T \times \P_d \times \P_{x_0}$. \cref{fig:penalty} reports the constraint violation across different iterations for increasing values of $c_1$. As expected, a larger value of the penalty leads to a faster decrease of the constraint violation. For small values of $c_1$, (approximately $c_1 \leq 2$) there is no improvement in the constraint violation within the simulated $1000$ iterations.}
\begin{figure}[ht]
\centering
\ifarxiv
    \includegraphics{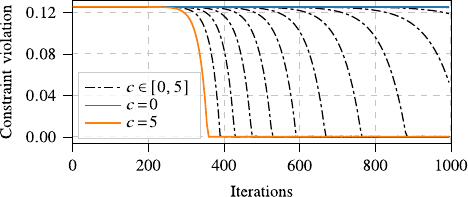}
\else
    \input{Figures/penalty.tex}
\fi
\caption{Closed-loop constraint violation across iterations for different values of penalty parameters.}\label{fig:penalty}
\end{figure}


\section{Conclusion and Limitations}
We proposed a principled way to design the cost and the constraint tightenings of an MPC scheme to improve good closed-loop performance and constraint satisfaction under noise and uncertainty on nonlinear systems.
We used the scenario approach to provide a probabilistic bound on the closed-loop constraint violation.
The resulting MPC problem is a QP that be solved efficiently with specialized software.

\rzz{Our approach can offer significant performance improvement while maintaining safety as long as the closed-loop trajectory of the true system (i.e., the one accounting for uncertainty and noise) does not deviate significantly from the nominal trajectory. However, the performance benefits may diminish if the set of possible initial conditions or the support of the noise is large. In such cases, we believe that traditional tube MPC methods (like the one proposed in \cite{kohler2019computationally}) may be a more appropriate choice.}

Future work may focus on developing a design strategy where the constraint violation chance is user-defined, or on improving the sample-efficiency of the algorithm.

\addtolength{\textheight}{-12.5cm}
\bibliographystyle{IEEEtran}
\bibliography{Sources/ref.bib}

\end{document}